\newtheorem{theorem}{Theorem}
\newtheorem{remark}{Remark}
\newtheorem{definition}{Definition}
\newtheorem{corollary}{Corollary}
\renewcommand{\vec}[1]{\mathbf{#1}}
\def\blfootnote{\xdef\@thefnmark{}\@footnotetext}
\begin{document}
	
\title{\huge Copula-based Performance Analysis for Fluid Antenna Systems under Arbitrary Fading Channels\thanks{The work of Wong and Tong is supported by the Engineering and Physical Sciences Research Council (EPSRC) under Grant EP/W026813/1. For the purpose of open access, the authors will apply a Creative Commons Attribution (CC BY) licence to any Author Accepted Manuscript version arising. The work of L\'opez-Mart\'inez was funded in part by Consejer\'ia de Transformaci\'on
Econ\'omica, Industria, Conocimiento y Universidades of Junta de Andaluc\'ia, and in part by MCIN/AEI/10.13039/501100011033 through grants EMERGIA20 00297 and PID2020-118139RB-I00.}} 
\author{Farshad~Rostami~Ghadi\IEEEmembership{},~Kai-Kit~Wong, \IEEEmembership{Fellow}, \textit{IEEE},~\\
F. Javier~L\'opez-Mart\'inez, \IEEEmembership{Senior Member}, \textit{IEEE}, and Kin-Fai Tong, \IEEEmembership{Fellow}, \textit{IEEE}
}
\maketitle

\begin{abstract}
In this letter, we study the performance of a single-user fluid antenna system (FAS) under arbitrary fading distributions, in which the fading channel coefficients over the ports are correlated. We adopt copula theory to model the structure of dependency between fading coefficients. Specifically, we first derive an exact closed-from expression for the outage probability in the most general case, i.e., for any arbitrary choice of fading distribution and copula. Afterwards, for an important specific case, we analyze the performance of the outage probability under correlated Nakagami-$m$ fading channels by exploiting popular Archimedean copulas, namely, Frank, Clayton, and Gumbel. The results demonstrate that FAS outperforms the conventional single fixed-antenna system in terms of the outage probability. We also see that the spatial correlation dependency structure for the FAS is a key factor to determine its performance, which is natively captured through the choice of copula.
\end{abstract}

\begin{IEEEkeywords}
Fluid antenna system, arbitrary fading, correlation, outage probability, Archimedean copulas.
\end{IEEEkeywords}

\maketitle

\blfootnote{\noindent Farshad Rostami Ghadi and F.J. L\'opez Mart\'inez are with the Communications and Signal Processing Lab, Telecommunication Research Institute (TELMA), Universidad de M\'alaga, M\'alaga, 29010, (Spain). F.J. L\'opez Mart\'inez is also with the Department of Signal Theory, Networking and Communications, University of Granada, 18071, Granada (Spain) (e-mail: $\rm farshad@ic.uma.es$, $\rm fjlm@ugr.es$).}
\blfootnote{\noindent Kai-Kit Wong and Kin-Fai Tong are with the Department of Electronic and Electrical Engineering, University College London, London WC1E 6BT, United Kingdom. Kai-Kit Wong is also with Yonsei Frontier Lab, Yonsei University, Seoul, 03722, Korea.(e-mail: $\{\rm kai\text{-}kit.wong,k.tong\}@ucl.ac.uk$).}

\blfootnote{Digital Object Identifier 10.1109/XXX.2021.XXXXXXX}

\section{Introduction}\label{sec-intro}
Multiple-input multiple-output (MIMO) systems have been known as one of the most popular wireless technologies over the last decades, providing significant capacity by exploiting diversity over multiple signals undergoing independent fading. However, to ensure full diversity gain in MIMO systems, the antennas need to be separated by at least half the radiation wavelength, which is not always practical for mobile devices due to physical space limitations. To overcome such an issue, a novel fluid antenna system (FAS) has been recently proposed in \cite{wong2020fluid}, in which a single antenna has the ability to switch its location (i.e., ports) in a small space. This concept was greatly motivated by the recent advances in mechanically flexible antennas such as liquid metal antennas or ionized solutions as well as reconfigurable pixel-like antennas, e.g., \cite{dey2016microfluidically,singh2019multistate,song2013efficient}. 
    

Several works have been recently conducted to investigate the performance of FAS from various viewpoints, e.g., \cite{wong2020performance,wong2023slow,tlebaldiyeva2022enhancing,skouroumounis2022large,wong2022closed,khammassi2023new}. The authors in \cite{wong2020performance} analyzed the ergodic capacity for a FAS under correlated Rayleigh fading channels, where they provided closed-form expressions of the capacity lower bound. An integral-form expression of the outage probability for a point-to-point FAS under correlated Nakagami-$m$ fading was derived in \cite{tlebaldiyeva2022enhancing}. Moreover, by exploiting stochastic geometry, \cite{skouroumounis2022large} derived an analytical expression of the outage probability in FAS large-scale cellular networks. Quite rightly, the performance of FAS is highly dependent on the spatial correlation model used for studying the performance of FAS. In \cite{wong2022closed}, it was revealed that previous contributions may not accurately capture the correlation between the FAS ports. For this purpose, \cite{khammassi2023new} proposed an eigenvalue-based model to approximate the spatial correlation given by Jake's model, where they illustrated that, under such model, the FAS has limited performance gain as the number of ports increases. Multiuser communications exploiting FAS has also been proposed recently in \cite{Wong-ffama2022,Wong-ffama2023,wong2023slow}.


Being able to correctly characterize the spatial correlation of the channel ports in FAS and yet maintaining mathematical tractability is challenging. Despite the previous efforts, there is lack of an accurate procedure to model the inherent channel correlation between ports. Specifically, generating the true multivariate distributions of correlated channels in FAS is hard due to mathematical and statistical limitations. To overcome this, one flexible approach to describe the structure of dependency between two or more random variables (RVs) is copula theory which has recently received significant attention in the performance analysis of wireless communication systems \cite{ghadi2020copula1,besser2020copula,besser2021fading,ghadi2022capacity,ghadi2022performance,trigui2022copula,ghadi2021role}. Generally speaking, copulas are mainly described with a dependence parameter which can measure the degree of dependence between RVs beyond linear correlation. Copulas can accurately generate the multivariate distributions of correlated RVs by only knowing the marginal distribution of each, and the copula parameter capturing the degree of dependence.

Motivated by the above, this letter analyzes the performance of FAS under arbitrary fading channels by exploiting the copula theory, which can accurately describe any sort of fading channel correlation. In contrast to previous works which only used approximations to generate specific multivariate distributions and then derived the outage probability in complicated integral forms, we propose general formulations for both multivariate distributions and outage probabilities in closed-form expressions. In particular, we first derive the cumulative distribution function (CDF) and probability density function (PDF) of the FAS channels in the most general case, i.e., for \textit{any} arbitrary choice of copula and fading distribution. Then, we obtain the closed-from expression of the outage probability in the FAS under correlated Nakgami-$m$ fading as an important special case by exploiting popular Archimedean\footnote{In this letter, we utilize Archimedean copulas for several reasons: (i) the ease with which they can be constructed; (ii) the great variety of families of copulas which belong to this class; and (iii) the many nice properties possessed by the members of this class \cite{nelsen2007introduction}.} copulas including Frank, Clayton, and Gumbel. Our analytical results show that there is no need to solve any complicated integrals to derive the multivariate distributions and the outage probability of the FAS. In addition, it is worth noting that the analytical results are valid for any arbitrary correlated fading distributions and can accurately quantify the degree of dependence between correlated fading channels. Furthermore, numerical results indicate that the best performance in terms of the outage probability occurs when the number of ports is large enough. The results also show that the Frank copula can be one of the best possible choices to analyze the channel correlation of the FAS.

\section{System Model}\label{sec-sys}
We consider a point-to-point communication system, where a single fixed-antenna transmitter sends an information signal $X$ with transmit power $P$ to a  receiver equipped with a single fluid antenna. We assume that the fluid antenna can move freely along $K$ equally distributed positions (i.e., ports) on a linear space. Assuming that there is only one RF chain in the FAS, only one port can be activated for communications, and the received signal at the $k$-th port can be expressed as
\begin{equation}
Y_k=h_kX+Z_k,
\end{equation}
where $h_k$ denotes the fading channel coefficient of the $k$-th port and $Z_k$ is the independent identically distributed (i.i.d.) additive white Gaussian noise (AWGN) with zero mean and variance $N$ at every port. In this scenario, we also assume that the FAS is able to select the best port with the strongest signal for communication, i.e., 
\begin{equation}
h_{\textrm{FAS}}=\max\left(|h_1|,|h_2|,\dots,|h_K|\right),
\end{equation}
in which the channel coefficients $h_k$ for $k\in\{1,2,...,K\}$ are correlated since they can be arbitrarily close to each other. Therefore, the received signal-to-noise ratio (SNR) for the FAS can be found as
\begin{equation}
\gamma=\frac{P|h_\mathrm{FAS}|^2}{N}=\bar{\gamma}|h_{\mathrm{FAS}}|^2,
\end{equation}
where $\bar{\gamma}=\frac{P}{N}$ is the average transmit SNR.

\section{Performance Analysis}
In order to obtain a closed-form expression of the outage probability, we first need to determine the distribution of $h_\mathrm{FAS}$ under arbitrary correlated fading coefficients. To proceed, we find it useful to briefly review some concepts of $d$-dimensional copula theory. Then, by exploiting the obtained distribution, we will be able to derive the closed-form expression of the outage probability in general and specific scenarios.

\subsection{Copula Properties}
\begin{definition}[$d$-dimensional copula]
Let $\vec{S}=(S_1,S_2,\dots,S_d)$ be a vector of $d$ RVs with marginal CDFs $F_{S_i}(s_i)$ for $i\in\{1,2,\dots,d\}$, respectively. Then, the corresponding joint CDF is defined as
\begin{multline}
F_{S_1,S_2,\dots,S_d}(s_1,s_2,\dots,s_d)\\
=\Pr(S_1\leq s_1,S_2\leq s_2,\dots,S_d\leq s_d).
\end{multline}
The copula function $C(u_1,u_2,\dots,u_d)$ of the random vector $\vec{S}$ defined on the unit hypercube $[0,1]^d$ with uniformly distributed RVs $U_i:=F_{S_i}(s_i)$ 
over $[0,1]$ is given by
\begin{equation}
C(u_1,u_2,\dots,u_d)=\Pr(U_1\leq u_1,U_2\leq u_2,\dots,U_d\leq u_d),
\end{equation}
where $u_i=F_{S_i}(s_i)$.
\end{definition}

\begin{theorem}[Sklar's theorem]\label{thm-sklar}
Let $F_{S_1,S_2,...,S_d}(s_1,s_2,\dots,s_d)$ be a joint CDF of RVs with margins $F_{S_i}(s_i)$ for $i\in\{1,2,\dots,d\}$. Then, there exists one Copula function $C$ such that for all $s_i$ in the extended real line domain $\bar{R}$ \cite{nelsen2007introduction}
\begin{equation}\label{sklar}
F_{S_1,S_2,\dots,S_d}(s_1,\dots,s_d)=C\left(F_{S_1}(s_1),\dots,F_{S_d}(s_d)\right).
\end{equation}
\end{theorem}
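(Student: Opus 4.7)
The plan is to construct the copula $C$ explicitly from the joint CDF and its marginals by composing with quantile (quasi-inverse) functions, and then to verify both the copula axioms and the functional identity in (\ref{sklar}). Concretely, I would define the generalized inverse
\begin{equation*}
F_{S_i}^{(-1)}(u) := \inf\{s \in \bar{R} : F_{S_i}(s) \geq u\},\quad i \in \{1,\ldots,d\},\; u \in [0,1],
\end{equation*}
and then set
\begin{equation*}
C(u_1, \ldots, u_d) := F_{S_1,\ldots,S_d}\bigl(F_{S_1}^{(-1)}(u_1), \ldots, F_{S_d}^{(-1)}(u_d)\bigr).
\end{equation*}

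The next step is to check that this $C$ is a bona fide copula: (i) $C$ is grounded, since $F_{S_i}^{(-1)}(0) = -\infty$ forces the joint CDF to vanish as soon as any argument equals $-\infty$; (ii) $C$ has uniform univariate marginals, obtained by evaluating at $u_j = 1$ for $j \neq i$ and using the identity $F_{S_i}(F_{S_i}^{(-1)}(u)) = u$ whenever $F_{S_i}$ is continuous; and (iii) $C$ is $d$-increasing, a property inherited from the corresponding $d$-increasing property of $F_{S_1,\ldots,S_d}$ together with the monotonicity of each $F_{S_i}^{(-1)}$.

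With $C$ in hand, the identity (\ref{sklar}) follows by substituting $u_i = F_{S_i}(s_i)$ into the definition of $C$ and invoking $F_{S_i}^{(-1)}(F_{S_i}(s_i)) = s_i$ in the continuous case, which immediately recovers $F_{S_1,\ldots,S_d}(s_1,\ldots,s_d)$ on the right-hand side.

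The hard part, and what takes Sklar's theorem beyond a routine calculation, is dealing with marginals that possess atoms. In that regime $F_{S_i}(S_i)$ is not uniform on $[0,1]$, so the pointwise construction above only defines $C$ on $\mathrm{Ran}(F_{S_1}) \times \cdots \times \mathrm{Ran}(F_{S_d})$ and must be extended to the full unit hypercube $[0,1]^d$. The two standard remedies are multilinear interpolation across the missing rectangles, or R\"uschendorf's distributional transform, which randomizes the jumps via auxiliary independent uniforms $V_i$; either route yields a copula on $[0,1]^d$ that agrees with the pointwise construction on the range product, and this extension step is where all the genuine subtlety lies.
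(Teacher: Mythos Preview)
Your proposal is the standard textbook construction of Sklar's theorem (essentially the argument in Nelsen's monograph), and its logic is sound. One small imprecision: in the continuous but not strictly increasing case you do not get $F_{S_i}^{(-1)}(F_{S_i}(s_i)) = s_i$ pointwise, only $F_{S_i}\bigl(F_{S_i}^{(-1)}(F_{S_i}(s_i))\bigr) = F_{S_i}(s_i)$; the identity (\ref{sklar}) still follows because any flat stretch of a marginal carries zero probability mass, hence leaves the joint CDF unchanged. Your discussion of the discontinuous case via the distributional transform or multilinear extension is accurate.

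The comparison you asked for is moot, however: the paper does not prove Theorem~\ref{thm-sklar} at all. Sklar's theorem is quoted as a classical result from \cite{nelsen2007introduction} and used as a black box; the paper's own contributions (Theorems~\ref{thm-cdf-gen}--\ref{thm-op}) merely \emph{apply} (\ref{sklar}) to the FAS channel model. So your proof goes well beyond what the paper supplies, and there is no competing argument in the paper to contrast it with.
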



\subsection{General Model: Arbitrary Correlated Fading Distribution}
Here, we derive the CDF and PDF of $h_\mathrm{FAS}$ as well as the outage probability for the most general case with any arbitrary choice of copula and fading distribution.

\begin{theorem}\label{thm-cdf-gen}
The CDF of $h_\mathrm{FAS}=\max\left(|h_1|,|h_2|,\dots,|h_K|\right)$ in the general dependence structure of arbitrary fading coefficients $|h_{k}|$ for $k\in\{1,2,\dots,K\}$ is given by
\begin{equation}\label{eq-cdf-gen}
F_{h_{\mathrm{FAS}}}(r)=C\left(F_{|h_1|}(r),F_{|h_2|}(r),\dots,F_{|h_K|}(r)\right),
\end{equation}
where $C(\cdot)$ is the copula function and $F_{|h_k|}(r)$ denotes the CDF of fading coefficient $|h_k|$ with an arbitrary distribution.
\end{theorem}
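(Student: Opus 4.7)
The plan is to reduce the CDF of the maximum to a joint CDF evaluated on the diagonal, and then invoke Sklar's theorem (Theorem~\ref{thm-sklar}) directly. First I would start from the definition of the CDF and use the elementary fact that the maximum of a finite collection of real numbers is bounded above by $r$ if and only if each of them is bounded above by $r$. That is,
\begin{equation*}
F_{h_{\mathrm{FAS}}}(r)=\Pr\!\left(\max_{1\leq k\leq K}|h_k|\leq r\right)=\Pr\!\left(\bigcap_{k=1}^{K}\{|h_k|\leq r\}\right),
\end{equation*}
which identifies $F_{h_{\mathrm{FAS}}}(r)$ with the joint CDF of $(|h_1|,\dots,|h_K|)$ evaluated at the diagonal point $(r,r,\dots,r)$.

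Next I would apply Sklar's theorem in the form of \eqref{sklar}: since the joint CDF of the correlated fading magnitudes can be written as a copula of the marginal CDFs, setting $s_k=r$ for every $k$ yields
\begin{equation*}
F_{|h_1|,\dots,|h_K|}(r,\dots,r)=C\bigl(F_{|h_1|}(r),\dots,F_{|h_K|}(r)\bigr),
\end{equation*}
which is exactly the claimed identity \eqref{eq-cdf-gen}. The whole argument is essentially a one-liner once the max-to-intersection reduction is made.

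There is no real technical obstacle here; the only subtlety worth flagging is well-definedness of the copula. Sklar's theorem guarantees that the copula $C$ is uniquely determined on the product of the ranges of the marginal CDFs, so whenever the $F_{|h_k|}$ are continuous (as is the case for the standard small-scale fading models of interest, including Nakagami-$m$), $C$ is unique and the formula \eqref{eq-cdf-gen} carries no ambiguity. I would briefly remark that no independence or identical-distribution assumption on the $|h_k|$ is used; the dependence structure of the ports is fully and natively encoded in the choice of $C$, and the marginal distributions enter only through the $F_{|h_k|}(r)$, which is precisely the feature that makes the formulation valid for arbitrary fading.
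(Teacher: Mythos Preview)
Your proposal is correct and follows essentially the same approach as the paper's own proof: reduce the CDF of the maximum to the joint CDF evaluated on the diagonal via the max-to-intersection equivalence, then invoke Sklar's theorem. Your additional remark on the uniqueness of the copula when the marginals are continuous is a welcome clarification not present in the paper, but the core argument is identical.
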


\begin{proof}
By exploiting the definition of the CDF, $F_{h_{\mathrm{FAS}}}(r)$ can be mathematically defined as
\begin{align}
F_{h_{\mathrm{FAS}}}(r)&\hspace{.5mm}=\Pr\left(h_\mathrm{FAS}\leq r\right)\notag\\
&\hspace{.5mm}=\Pr\left(\max\left\{|h_1|,|h_2|,\dots,|h_K|\right\}\leq r\right)\notag\\
&\hspace{.5mm}=\Pr\left(|h_1|\leq  r,|h_2|\leq r,\dots,|h_K|\leq r\right)\notag\\
&\hspace{.5mm}=F_{|h_1|,|h_2|,\dots,|h_K|}\left(r,r,\dots,r\right)\notag\\
&\overset{(a)}{=}C\left(F_{|h_1|}(r),F_{|h_2|}(r),\dots,F_{|h_K|}(r)\right),
\end{align}
where ($a$) is derived from Theorem \ref{thm-sklar}.
\end{proof}

\begin{theorem}\label{thm-pdf-gen}
The PDF of $h_\mathrm{FAS}=\max\left(|h_1|,|h_2|,\dots,|h_K|\right)$ in the general dependence structure of arbitrary fading coefficients $|h_{k}|$ for $k\in\{1,2,\dots,K\}$ is given by
\begin{multline}\label{eq-pdf-gen}
f_{h_{\mathrm{FAS}}}(r)=f_{|h_1|}(r)f_{|h_2|}(r)\cdots f_{|h_K|}(r)\\
\times c\left(F_{|h_1|}(r),F_{|h_2|}(r),\dots,F_{|h_K|}(r)\right),
\end{multline}
where $f_{|h_k|}(r)$ denotes the marginal PDF of fading coefficient $|h_k|$ with an arbitrary distribution and  $c(\cdot)$ is the copula density function which can be determined as
\begin{multline}
c\left(F_{|h_1|}(r),F_{|h_2|}(r),\dots,F_{|h_K|}(r)\right)\\
=\frac{\partial^d C\left(F_{|h_1|}(r),F_{|h_2|}(r),\dots,F_{|h_K|}(r)\right)}{\partial{F_{|h_1|}(r)}\partial{F_{|h_2|}(r)}\cdots \partial{F_{|h_K|}(r)}}.
\end{multline}
\end{theorem}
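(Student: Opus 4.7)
The plan is to differentiate the CDF from Theorem 2 once with respect to $r$ and rewrite the resulting single-variable derivative in the product-of-marginals-times-copula-density form appearing in the statement.

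First, I would start from $F_{h_{\mathrm{FAS}}}(r) = C(F_{|h_1|}(r),\dots,F_{|h_K|}(r))$ (Theorem 2) and introduce $u_k(r) := F_{|h_k|}(r)$, so that $u_k'(r) = f_{|h_k|}(r)$ by the fundamental theorem of calculus applied to each marginal CDF. The PDF of $h_{\mathrm{FAS}}$ is then the scalar derivative $f_{h_{\mathrm{FAS}}}(r) = \tfrac{d}{dr}\, C(u_1(r),\dots,u_K(r))$, i.e., the single-variable derivative of the maximum's CDF.

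Second, I would connect this scalar derivative to the copula density $c$ by routing through the joint representation of the CDF. Recall from the proof of Theorem 2 that $F_{h_{\mathrm{FAS}}}(r) = F_{|h_1|,\dots,|h_K|}(r,\dots,r)$. Applying the density-level version of Sklar's theorem, obtained by differentiating \eqref{sklar} in all $K$ of its arguments, yields $f_{|h_1|,\dots,|h_K|}(s_1,\dots,s_K) = c(F_{|h_1|}(s_1),\dots,F_{|h_K|}(s_K)) \prod_{k=1}^K f_{|h_k|}(s_k)$, and specialising to the diagonal $s_1=\dots=s_K=r$ produces exactly the right-hand side of \eqref{eq-pdf-gen}.

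The main obstacle is reconciling these two routes. A direct scalar chain rule on the composition $C(u_1(r),\dots,u_K(r))$ naturally returns a sum of first-order partials $\sum_k \partial_{u_k} C \cdot f_{|h_k|}(r)$, whereas the claimed formula features the full $K$-th order mixed partial $c$ multiplied by the product of \emph{all} marginal densities. The crux of the proof is thus to identify the PDF of $h_{\mathrm{FAS}}$ along the diagonal with the joint density on that diagonal, rather than with the sum produced by blind scalar differentiation; this identification is the step that Theorem 2 alone does not supply, and it is precisely what must be made rigorous to bridge the single-derivative form of $f_{h_{\mathrm{FAS}}}$ and the mixed-partial form in \eqref{eq-pdf-gen}.
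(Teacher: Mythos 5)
You have correctly located the fault line, but the resolution you propose runs in the wrong direction. The scalar chain rule applied to $F_{h_{\mathrm{FAS}}}(r)=C\left(F_{|h_1|}(r),\dots,F_{|h_K|}(r)\right)$ is not ``blind'' differentiation to be discarded --- it is the correct computation of the density of the maximum, and it yields
\[
f_{h_{\mathrm{FAS}}}(r)=\sum_{k=1}^{K}\frac{\partial C}{\partial u_k}\bigg|_{u_j=F_{|h_j|}(r)}\, f_{|h_k|}(r),
\]
a sum of first-order partials. The identification you propose to ``make rigorous'' --- that the PDF of the maximum equals the joint density $f_{|h_1|,\dots,|h_K|}(r,\dots,r)$ on the diagonal --- is false in general: $F_{h_{\mathrm{FAS}}}(r)$ is the integral of the joint density over the hypercube $[0,r]^K$, and its $r$-derivative is a sum of integrals over the $(K-1)$-dimensional faces of that cube, not the value of the integrand at the corner $(r,\dots,r)$. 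A one-line counterexample is the independent case with $K=2$ uniform marginals on $[0,1]$: there $c\equiv 1$, so the right-hand side of \eqref{eq-pdf-gen} equals $1$, while the maximum of two i.i.d.\ uniforms has density $2r$.

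Consequently this is not a gap you can close: your second route proves only that the right-hand side of \eqref{eq-pdf-gen} equals $f_{|h_1|,\dots,|h_K|}(r,\dots,r)$ (the density-level form of Sklar's theorem on the diagonal), which is a genuinely different object from $f_{h_{\mathrm{FAS}}}(r)$; your first route is in fact the proof of the correct statement. The paper's own one-line proof (``apply the chain rule to Theorem~\ref{thm-cdf-gen}'') would, if actually carried out, produce the first-order sum above and thereby contradict the displayed expression, so the theorem as stated is incorrect rather than merely under-proved. Note that the remainder of the paper is unaffected, since Theorems~\ref{thm-cdf-gen} and~\ref{thm-op-gen} rely only on the CDF.
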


\begin{proof}
The result is obtained by applying the chain rule to Theorem \ref{thm-cdf-gen}.
\end{proof}

\begin{theorem}\label{thm-op-gen}
The outage probability of the considered FAS in the general dependence structure under an arbitrary fading distribution is given by
\begin{equation}\label{eq-out-gen}
P_\mathrm{out}=C\left(F_{|h_1|}(\hat{\gamma}),F_{|h_2|}(\hat{\gamma}),\dots,F_{|h_K|}(\hat{\gamma})\right),
\end{equation}
where $\hat{\gamma}=\sqrt\frac{\gamma_{\mathrm{th}}}{\bar{\gamma}}$.
\end{theorem}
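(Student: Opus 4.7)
The plan is to recognize that the outage probability is nothing other than the CDF of $h_{\mathrm{FAS}}$ evaluated at a specific point, so Theorem \ref{thm-op-gen} follows as an immediate corollary of Theorem \ref{thm-cdf-gen} once the SNR threshold is translated into an equivalent threshold on the envelope $|h_{\mathrm{FAS}}|$.

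Concretely, I would begin from the standard definition $P_\mathrm{out}=\Pr(\gamma\leq\gamma_{\mathrm{th}})$ for some prescribed threshold $\gamma_{\mathrm{th}}$. Substituting the SNR expression derived in the system model, $\gamma=\bar\gamma|h_{\mathrm{FAS}}|^2$, gives $P_\mathrm{out}=\Pr(\bar\gamma|h_{\mathrm{FAS}}|^2\leq\gamma_{\mathrm{th}})$. Since $\bar\gamma>0$ and $|h_{\mathrm{FAS}}|\geq 0$, the square root is monotone on the relevant range, so the event inside the probability is equivalent to $|h_{\mathrm{FAS}}|\leq\sqrt{\gamma_{\mathrm{th}}/\bar\gamma}=\hat\gamma$. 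Hence $P_\mathrm{out}=F_{h_{\mathrm{FAS}}}(\hat\gamma)$.

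The final step is to invoke Theorem \ref{thm-cdf-gen} with $r=\hat\gamma$, which directly yields
\begin{equation*}
P_\mathrm{out}=C\bigl(F_{|h_1|}(\hat\gamma),F_{|h_2|}(\hat\gamma),\dots,F_{|h_K|}(\hat\gamma)\bigr),
\end{equation*}
as claimed. Since Theorem \ref{thm-cdf-gen} already holds for any copula $C$ and any marginal CDFs $F_{|h_k|}$, no additional assumptions on the fading distributions or the dependence structure are required here.

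There is essentially no obstacle in this argument, it is a two-line derivation that merely rescales the SNR threshold to an equivalent envelope threshold and then quotes Theorem \ref{thm-cdf-gen}. The only subtlety worth noting is the justification that taking the square root preserves the inequality, which is immediate because both $\bar\gamma$ and $|h_{\mathrm{FAS}}|$ are non-negative; this is what allows $\hat\gamma$ to be well defined and the transformation to be one-to-one on the support of the envelope.
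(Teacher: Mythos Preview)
Your proposal is correct and follows essentially the same route as the paper: start from $P_{\mathrm{out}}=\Pr(\gamma\leq\gamma_{\mathrm{th}})$, rewrite the event as $h_{\mathrm{FAS}}\leq\sqrt{\gamma_{\mathrm{th}}/\bar\gamma}=\hat\gamma$, identify this as $F_{h_{\mathrm{FAS}}}(\hat\gamma)$, and then invoke Theorem~\ref{thm-cdf-gen}. The only addition you make is the explicit remark on the monotonicity of the square root, which the paper leaves implicit.
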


\begin{proof}
Outage probability is an appropriate metric to evaluate the performance of FAS, which is defined as the probability that the random SNR $\gamma$ is less than an SNR threshold $\gamma_\mathrm{th}$. Therefore, the outage probability is defined as
\begin{align}
P_{\mathrm{out}}&=\Pr\left(\gamma\leq\gamma_{\mathrm{th}}\right)=\Pr\left(h_\mathrm{FAS}\leq\sqrt\frac{\gamma_{\mathrm{th}}}{\bar{\gamma}}\right)\notag\\
&=\Pr\left(\max\left\{|h_1|,|h_2|,\dots,|h_K|\right\}\leq \sqrt\frac{\gamma_{\mathrm{th}}}{\bar{\gamma}}\right)\notag\\
&=F_{h_{\mathrm{FAS}}}(\hat{\gamma}),
\end{align}
where by exploiting the obtained CDF form \eqref{eq-cdf-gen}, the proof is completed. 
\end{proof}

\begin{remark}
In contrast to previous contributions, the results in \eqref{eq-cdf-gen}, \eqref{eq-pdf-gen}, and \eqref{eq-out-gen} indicate that there is no need to solve any complicated integral for deriving the outage probability and the joint distribution of channel coefficients in the FAS, provided that the copula function $C(\cdot)$ is given in closed-form. 
\end{remark}

Remarkably, the results in Theorems \ref{thm-cdf-gen}--\ref{thm-op-gen} are valid for \textit{any} arbitrary choice of fading distribution and copula function over the proposed FAS. Now, to analyze the system performance, we consider a special case in the following section.

\subsection{Special Case: Correlated Nakagami-$m$ Fading}
Here, for exemplary purposes, we assume that the fading channel coefficients $|h_k|$ follow the Nakagami-$m$ distribution, where the parameter $m\ge0.5$ denotes the fading severity. In the special case $m=1$, the Rayleigh fading with an exponentially distributed instantaneous power is recovered. Hence, the marginal distributions for the fading channel coefficient $|h_k|$ can be expressed as
\begin{equation}\label{eq-pdf-nak}
f_{|h_k|}(r)=\frac{2m^m}{\Gamma(m)\mu^m}r^{2m-1}\mathrm{e}^{-\frac{m}{\mu}r^2},
\end{equation}
with the following CDF
\begin{equation}\label{eq-cdf-nak}
	F_{|h_k|}(r)=\frac{\gamma\left(m,\frac{m}{\mu}r^2\right)}{\Gamma(m)},
\end{equation}
in which $m$ and $\mu$ define the shape and spread parameters, respectively. The terms $\Gamma(\cdot)$ and $\gamma(\cdot,\cdot)$ denote gamma function and the lower incomplete gamma function, respectively.

In order to evaluate the structure of dependency beyond linear coordination between the correlated fading channel coefficients, there are many types of copula functions that can be used. However, in this letter, we derive the analytical expressions by using the popular Archimedean copulas, namely, Frank, Clayton, and Gumbel. These flexible copula functions can describe both weak and strong correlations between correlated RVs under positive dependence structures. 

\begin{definition}[Frank copula]
The generalized Frank copula $C_{\mathrm{FR}}$ of $d$-dimension is defined as
\begin{equation}\label{eq-frank}
C_{\mathrm{FR}}(u_1,u_2,\dots,u_d)=-\frac{1}{\alpha}\ln\left(1+\frac{\prod_{j=1}^d\left(\mathrm{e}^{-\alpha u_j}-1\right)}{\mathrm{e}^{-\alpha}-1}\right),
\end{equation}
where $\alpha\in \mathbb{R}\backslash\{0\}$ is a dependence structure parameter of Frank copula. The independent case is achieved when $\alpha\rightarrow 0$.
\end{definition}

\begin{definition}[Clayton copula]
The generalized Clayton copula $C_{\mathrm{CL}}$ of $d$-dimension is defined as
\begin{equation}\label{eq-cl}
C_{\mathrm{CL}}(u_1,u_2,\dots,u_d)=\left[\sum_{j=1}^d \left(u_j^{-\beta}-1\right)+1\right]^{-\frac{1}{\beta}},
\end{equation}
where $\beta\in [0,\infty)$ is a dependence structure parameter of Clayton copula. The independent case is achieved if $\beta=0$.
\end{definition}

\begin{definition}[Gumbel copula]
The generalized Gumbel copula $C_{\mathrm{GU}}$ of $d$-dimension is defined as
\begin{equation}\label{eq-gu}
C_{\mathrm{GU}}(u_1,u_2,\dots,u_d)=\exp\left(-\left[\sum_{j=1}^d \left(-\ln u_j\right)^\theta\right]^{\frac{1}{\theta}}\right),
\end{equation}
where $\theta\in [1,\infty)$ is a dependence structure parameter of Gumbel copula. The independent case is achieved if $\theta=1$.
\end{definition}

Now, by exploiting the definition of the above-mentioned copulas, the outage probability in the specific model can be determined in the following theorem. 

\begin{theorem}\label{thm-op}
The outage probability of the considered FAS under Nakagami-$m$ fading channel, using the Frank, Clayton, and Gumbel copulas is, respectively, given by
\begin{align}\label{eq-out-fr}
P_{\mathrm{out}}^{\mathrm{FR}}=-\frac{1}{\alpha}\ln\left(1+\frac{\left[\exp\left(\frac{-\alpha\gamma\left(m,\frac{m}{\mu}{\hat{\gamma}}^2\right)}{\Gamma(m)}\right)-1\right]^K}{\mathrm{e}^{-\alpha}-1}\right),
\end{align}
\begin{align}\label{eq-out-cl}
P_{\mathrm{out}}^{\mathrm{CL}}=\left[K \left(\left(\frac{\gamma\left(m,\frac{m}{\mu}{\hat{\gamma}}^2\right)}{\Gamma(m)}\right)^{-\beta}-1\right)+1\right]^{-\frac{1}{\beta}},
\end{align}
and
\begin{align}\label{eq-out-gu}
P_{\mathrm{out}}^{\mathrm{GU}}=\exp\left(K^{\frac{1}{\theta}} \ln \frac{\gamma\left(m,\frac{m}{\mu}{\hat{\gamma}}^2\right)}{\Gamma(m)}\right).
\end{align}
\end{theorem}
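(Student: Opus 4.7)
The plan is to apply Theorem \ref{thm-op-gen} directly, substituting the Nakagami-$m$ marginal CDF from \eqref{eq-cdf-nak} into each of the three Archimedean copulas \eqref{eq-frank}, \eqref{eq-cl}, and \eqref{eq-gu}. Since the $K$ fading coefficients $|h_k|$ are identically distributed (they share the common parameters $m$ and $\mu$), every argument of the copula evaluates to the same quantity, namely
\begin{equation*}
u := F_{|h_k|}(\hat{\gamma}) = \frac{\gamma\!\left(m,\tfrac{m}{\mu}\hat{\gamma}^2\right)}{\Gamma(m)}, \qquad k=1,\dots,K.
\end{equation*}
Thus the derivation reduces to evaluating each copula on the diagonal $(u,u,\dots,u)$.

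For the Frank copula, setting $u_j=u$ in \eqref{eq-frank} collapses the product $\prod_{j=1}^{K}(\mathrm{e}^{-\alpha u_j}-1)$ into $(\mathrm{e}^{-\alpha u}-1)^{K}$, after which substituting $u$ and rewriting $\mathrm{e}^{-\alpha u}=\exp(-\alpha u)$ yields \eqref{eq-out-fr}. For the Clayton copula, the sum $\sum_{j=1}^{K}(u_j^{-\beta}-1)$ in \eqref{eq-cl} collapses to $K(u^{-\beta}-1)$, which gives \eqref{eq-out-cl} directly.

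The Gumbel case is the only one requiring a small manipulation beyond collapsing the sum. In \eqref{eq-gu}, $\sum_{j=1}^{K}(-\ln u_j)^{\theta}$ becomes $K(-\ln u)^{\theta}$, so the bracket raised to $1/\theta$ factors as $K^{1/\theta}(-\ln u)$ (here I would note that $0\leq u\leq 1$ ensures $-\ln u\geq 0$, so the real $\theta$-th root is unambiguous). The outer exponential then becomes $\exp(-K^{1/\theta}(-\ln u)) = \exp(K^{1/\theta}\ln u)$, which matches \eqref{eq-out-gu}.

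Overall, no technical obstacle is anticipated; the statement is essentially a specialization of Theorem \ref{thm-op-gen} combined with the i.i.d.\ marginal structure. The one place to be careful is the Gumbel derivation, where the manipulation of signs inside the logarithm and the $1/\theta$ exponent must be handled cleanly so that the final expression appears in the compact $\exp(K^{1/\theta}\ln u)$ form rather than a nested exponential-logarithm expression.
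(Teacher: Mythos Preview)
Your proposal is correct and follows essentially the same approach as the paper's proof, which simply states that the result follows by inserting \eqref{eq-cdf-nak} into \eqref{eq-frank}, \eqref{eq-cl}, and \eqref{eq-gu} with $u_j=F_{|h_k|}(\hat{\gamma})$ and then invoking \eqref{eq-out-gen}. Your version is more detailed---particularly the sign handling in the Gumbel case---but the underlying argument is identical.
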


\begin{proof}
By inserting \eqref{eq-cdf-nak} into \eqref{eq-frank}, \eqref{eq-cl}, and \eqref{eq-gu} for $u_j=F_{|h_k|}(\hat{\gamma})$ and then considering $\eqref{eq-out-gen}$, the proof is completed.  
\end{proof}

\begin{corollary}
As $K\rightarrow\infty$, the outage probability goes to $0$ as long as $\alpha, \beta\neq 0$.
\end{corollary}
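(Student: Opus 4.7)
The plan is to substitute $K \to \infty$ directly into each of the three explicit outage expressions \eqref{eq-out-fr}, \eqref{eq-out-cl}, and \eqref{eq-out-gu} produced in Theorem \ref{thm-op}, and examine the limiting behaviour one copula at a time. To this end, I would first introduce the abbreviation $F \triangleq \gamma\!\left(m,\tfrac{m}{\mu}\hat{\gamma}^{2}\right)/\Gamma(m)$ and note once and for all that $F \in (0,1)$ whenever $\hat{\gamma}$ is finite and positive, since the lower incomplete gamma function satisfies $0<\gamma(m,x)<\Gamma(m)$ for every $x\in(0,\infty)$. This single observation is what drives all three cases.

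For the Clayton branch, because $F\in(0,1)$ and $\beta>0$, the term $F^{-\beta}-1$ is strictly positive, so $K\!\left(F^{-\beta}-1\right)+1\to\infty$ and the negative exponent $-1/\beta$ sends the whole bracket to $0$. For the Gumbel branch, $\ln F<0$ and $K^{1/\theta}\to\infty$ for every $\theta\geq 1$, so $K^{1/\theta}\ln F\to-\infty$ and $P_{\mathrm{out}}^{\mathrm{GU}}\to 0$ by continuity of the exponential. For the Frank branch, restricting to the positive-dependence regime $\alpha>0$ (the only regime in which the Frank copula is guaranteed to be a valid $K$-copula for $K\geq 3$), one checks that $\mathrm{e}^{-\alpha F}-1\in(\mathrm{e}^{-\alpha}-1,0)\subset(-1,0)$, hence $|\mathrm{e}^{-\alpha F}-1|<1$ and $\bigl(\mathrm{e}^{-\alpha F}-1\bigr)^{K}\to 0$. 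The argument of the outer logarithm therefore tends to $1$, the logarithm to $0$, and the prefactor $-1/\alpha$ is finite, so $P_{\mathrm{out}}^{\mathrm{FR}}\to 0$.

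I do not anticipate a real obstacle: every case reduces to an elementary limit of a monotone or exponential expression, and the non-degeneracy conditions $\alpha\neq 0$ and $\beta\neq 0$ assumed in the corollary are exactly what keeps the denominators and the negative exponent finite. The only subtlety worth flagging is the sign of the Frank parameter, which is why I would restrict to $\alpha>0$ and comment that the negative-dependence branch of Frank is not admissible in high dimension; a reader worried about this could recover the conclusion in the independent limits $\alpha,\beta\to 0$ and $\theta\to 1$ by noting that then $P_{\mathrm{out}}=F^{K}\to 0$ trivially, which provides an additional sanity check across the three families.
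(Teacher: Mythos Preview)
Your proposal is correct and follows essentially the same case-by-case approach as the paper's own proof: both examine \eqref{eq-out-fr}--\eqref{eq-out-gu} individually and argue that the Frank numerator is a $K$-fold product of terms with absolute value below one, that the Clayton bracket diverges and is then killed by the exponent $-1/\beta$, and that the Gumbel exponent tends to $-\infty$ because $\ln F<0$. Your version is simply more explicit (introducing $F\in(0,1)$, spelling out the inequalities, and flagging the restriction $\alpha>0$ for the multivariate Frank copula), but the underlying argument is identical.
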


\begin{proof}
It can be seen from \eqref{eq-out-fr} that the fraction term inside the logarithm is the product of $K$ less-than-one values. If $K\rightarrow \infty$, it goes to $0$. In \eqref{eq-out-cl}, it is straightforward as $K\rightarrow \infty$, the outage probability reaches $0$ due to the term $\frac{-1}{\beta}$. Finally, in \eqref{eq-out-gu}, the logarithm value is always negative, thereby when $K\rightarrow\infty$, the outage probability achieves $0$. 
\end{proof}

\begin{remark}
The outage probability for the considered FAS can be accurately obtained in a closed-form expression according to Theorem \ref{thm-op}. We can see that the outage probability highly depends on the number of ports $K$, dependence parameters $\alpha$, $\beta$, and $\theta$, and the fading parameter $m$, namely, the performance of the outage probability will improve as $K$ and $m$ increase and  $\alpha$, $\beta$, and $\theta$ decrease. 
\end{remark}

\begin{figure}[!t]
	\centering
	\includegraphics[width=0.9\columnwidth]{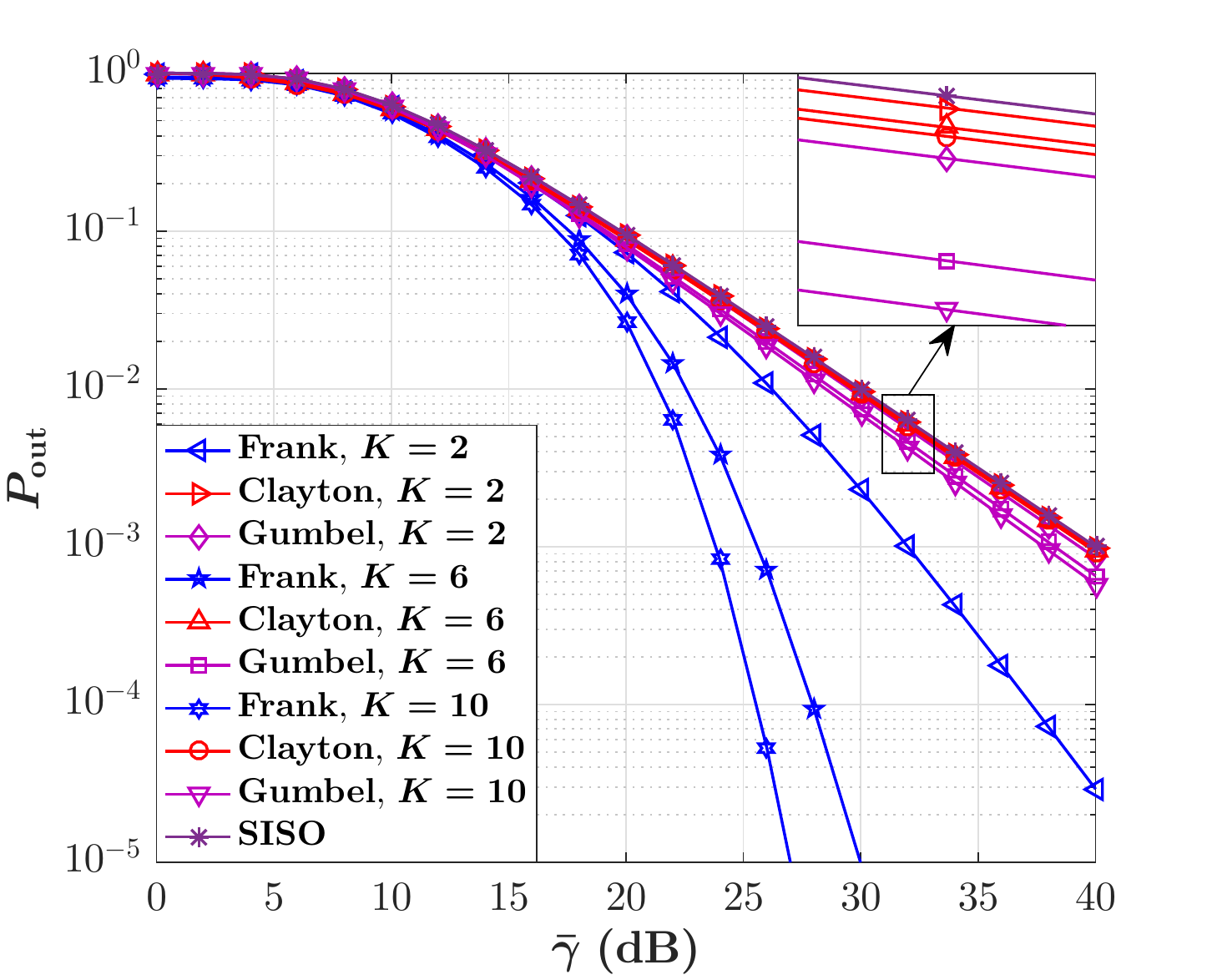}
	\caption{Outage probability versus average SNR for selected values of $K$, when $m=1$, $\alpha=\beta=\theta=30$, and $\mu=1$.}
	\label{fig-out-k}
\end{figure}
\begin{figure}[!t]
	\centering
	\includegraphics[width=0.9\columnwidth]{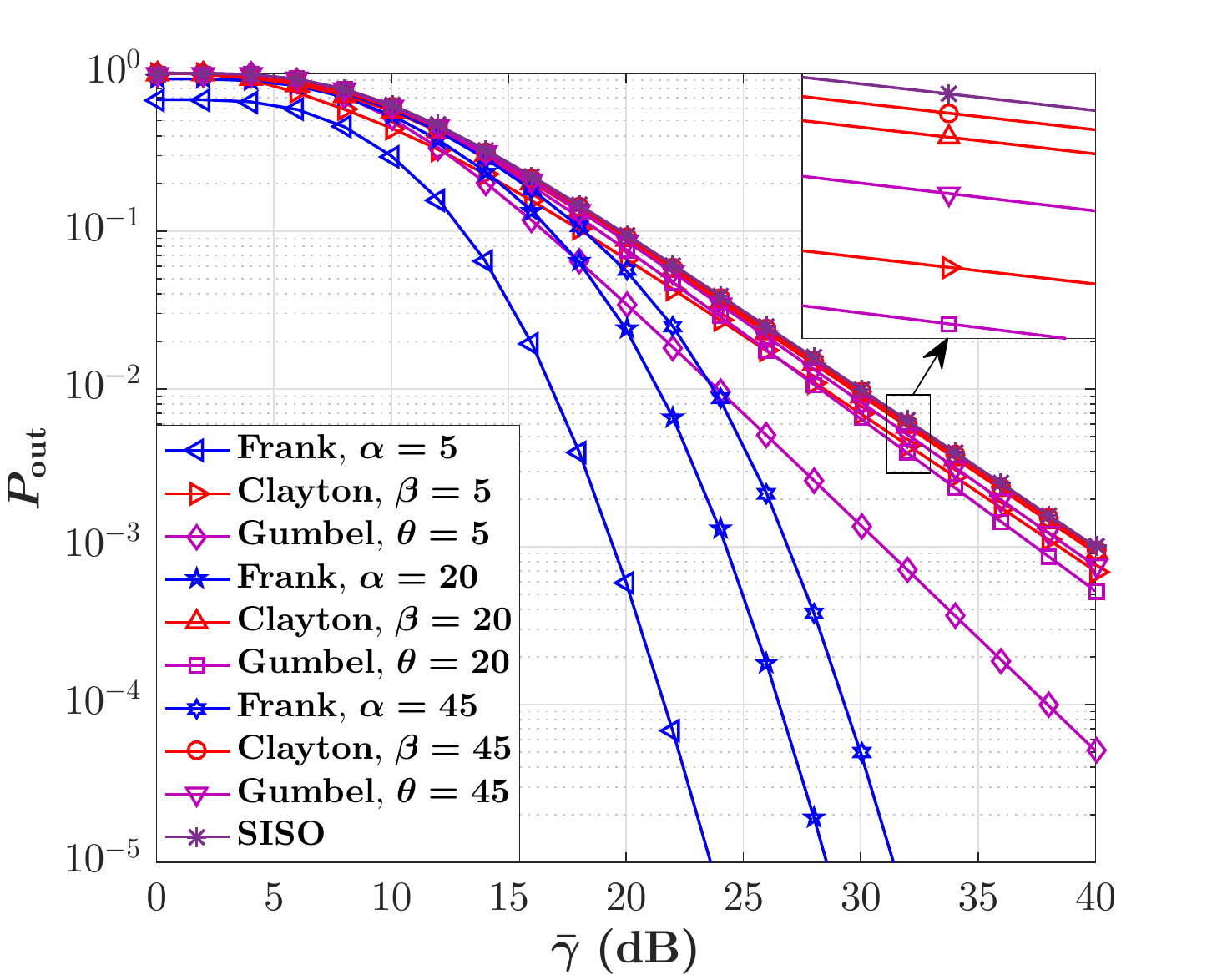}
	\caption{Outage probability versus average SNR $\bar{\gamma}$ for selected values of dependence parameters, when $m=1$, $K=6$, and $\mu=1$.}
	\label{fig-dep}
\end{figure}
\begin{figure}[!t]
	\centering
	\includegraphics[width=0.9\columnwidth]{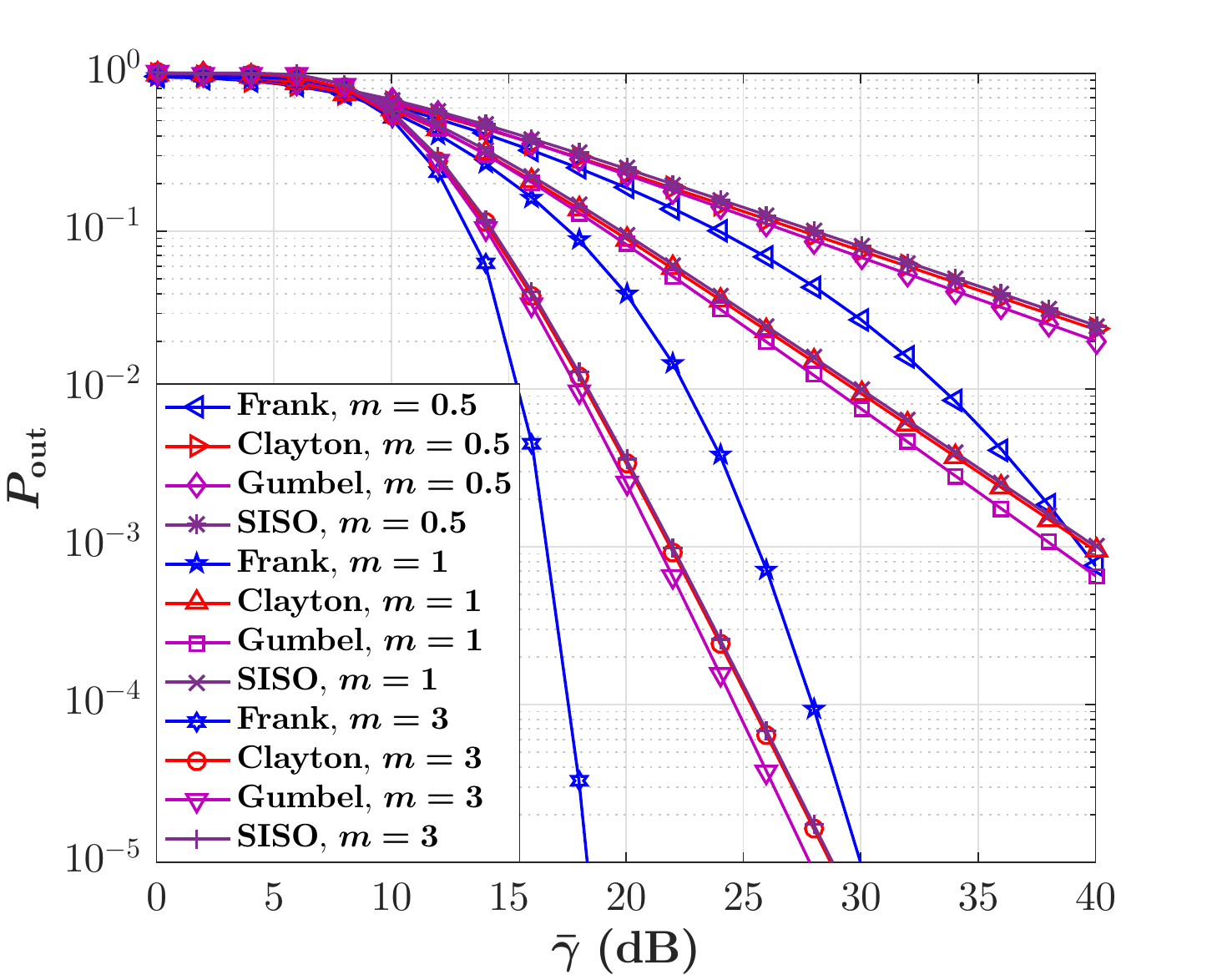}
	\caption{Outage probability versus average SNR $\bar{\gamma}$ for selected values of the fading parameter $m$, when $K=6$, $\alpha=\beta=\theta=30$, and $\mu=1$.}
	\label{fig-m}
\end{figure}

\section{Numerical Results}\label{sec-num}
In this section, we present numerical results to evaluate the FAS performance in terms of the outage probability. Fig.~\ref{fig-out-k} indicates the behavior of the outage probability against the average SNR $\bar{\gamma}$ for different numbers of fluid antenna ports $K$ under correlated fading channels. It can be observed that FAS provides lower outage probability compared with the conventional single-input-single-output (SISO) system so that for a fixed value of dependence parameters, the outage probability performance improves as $K$ increases. By careful observation of the curves, we can also see that the Frank copula is more sensitive to even a small change of $K$, compared with the Clayton and Gumbel copulas. Frank copula also appears to give the least outage probability in FAS compared to others, which suggests that one should design FAS that resembles the correlation structure described by Frank copula.

The impact of fading channel correlation on the performance of the outage probability is illustrated in Fig.~\ref{fig-dep}. It is clearly seen that fading correlation has destructive effects on the outage probability. As the dependence parameters of the Archimedean copulas grow, the outage probability increases. However, we can see that even under a strong positive dependence structure between fading channel coefficients, FAS outperforms the SISO system in terms of the outage probability. Moreover, it is found that the Frank copula also provides the best performance from a correlation perspective compared with two other considered copulas. 

Regarding the importance of fading severity on the performance of FAS, Fig.~\ref{fig-m} shows the behavior of the outage probability versus the average SNR for different values of the fading parameter $m$ under correlated Nakagami-$m$ distribution. The results indicate that as the fading severity reduces (i.e., $m$ increases), the outage probability performance ameliorates and such an improvement is more noticeable when Frank copula is considered to describe the fading correlation. To gain more insight into the impact of the number of ports on the FAS performance, Fig.~\ref{fig-k} shows the behavior of the outage probability in terms of $K$ for selected values of the average SNR. It can be seen that even under a strong correlation provided by the Frank copula (e.g., $\alpha=30$), the outage probability significantly decreases as $K$ becomes large. Fig.~\ref{fig-dep2} also shows the impact of the dependence parameter on the outage probability performance for the considered FAS. The results reveal that when the correlation is weak (i.e., low dependence parameter), a lower outage probability is achieved. In addition, the Frank copula provides better performance for weak correlation compared with the Clayton and Gumbel copulas. However, as the correlation between channel coefficients becomes stronger (i.e., high dependence parameter), the Clayton and Gumbel copulas offer almost the same behavior as the Frank copula. Therefore, by comparing the results in Figs.~\ref{fig-k} and \ref{fig-dep2}, it can be found that the best performance for the outage probability occurs when $K$ is sufficiently large and the dependence parameter is small enough.

\begin{figure}[!t]
	\centering
	\includegraphics[width=0.9\columnwidth]{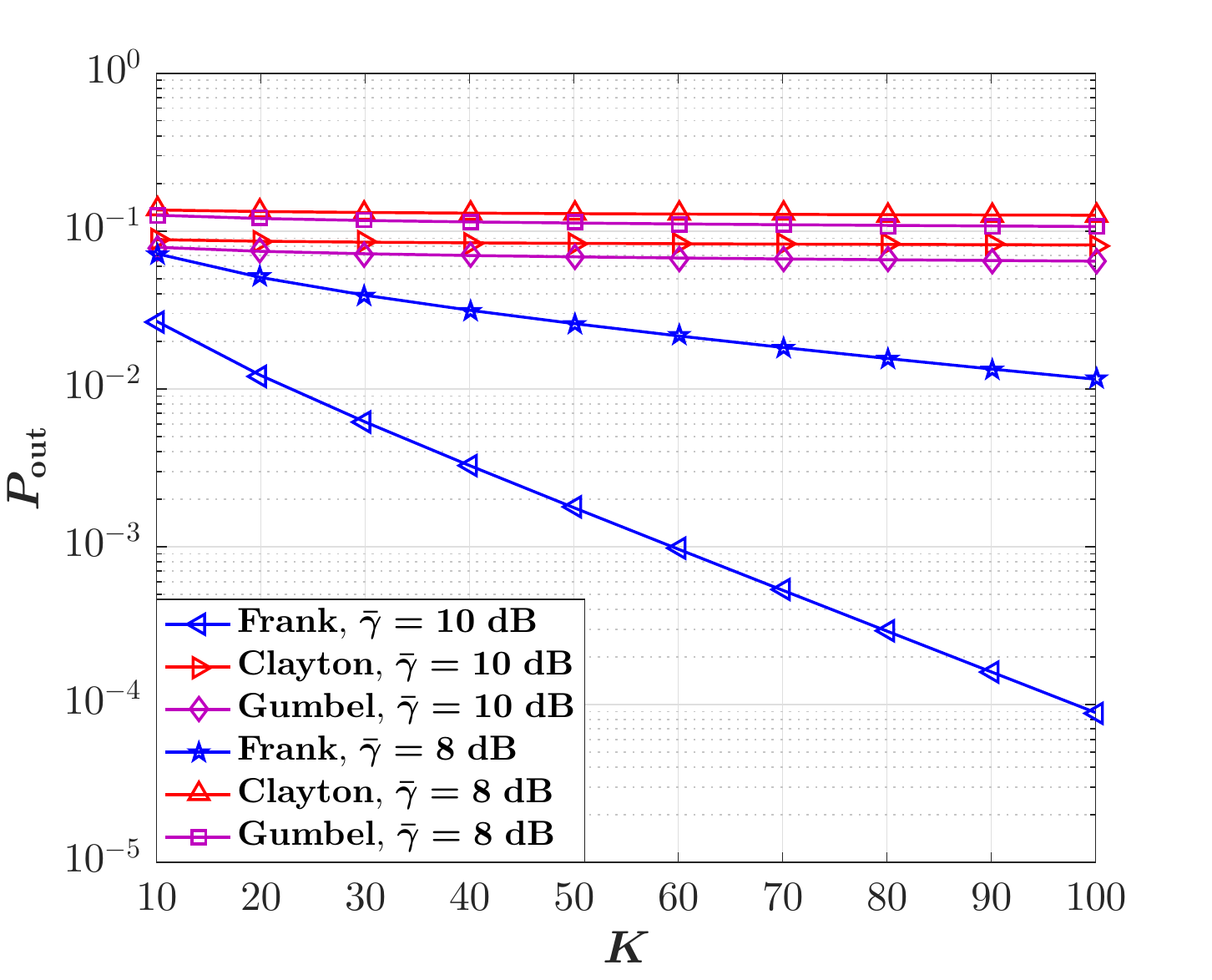}
	\caption{Outage probability versus number of ports $K$ for selected values of the average SNR, when $m=1$, $\alpha=\beta=\theta=30$, and $\mu=1$.}
	\label{fig-k}
\end{figure}
\begin{figure}[!t]
	\centering
	\includegraphics[width=0.9\columnwidth]{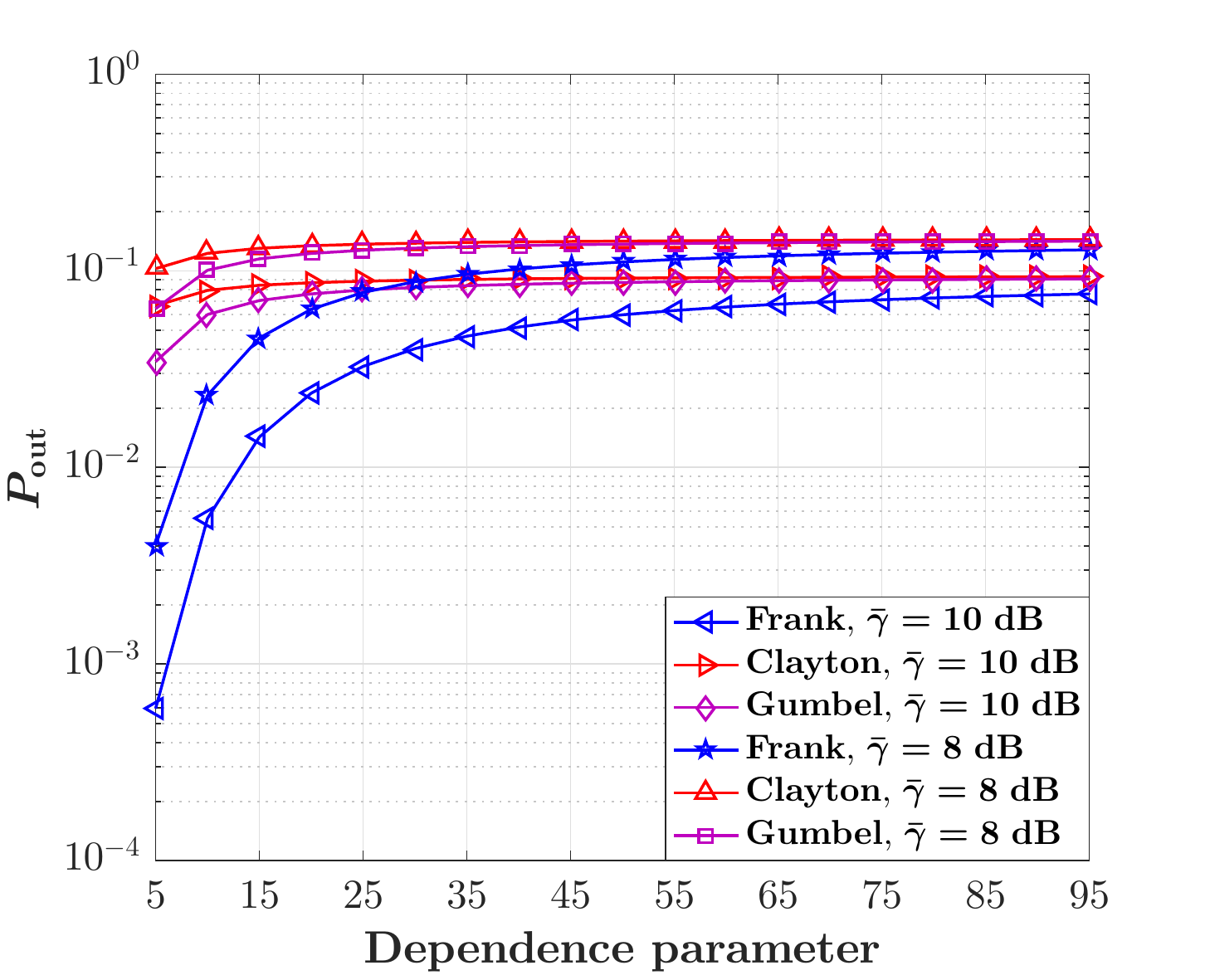}
	\caption{Outage probability versus dependence parameters for selected values of the average SNR, when $m=1$, $K=6$, and $\mu=1$.}
	\label{fig-dep2}
\end{figure}

\section{Conclusion}\label{sec-con}
In this letter, we studied the performance of a point-to-point FAS, where the correlated fading channels have arbitrary distributions. First, by exploiting the copula approach to model the structure of dependency between correlated arbitrary fading channels, we derived the closed-form expression for the outage probability in the most general case. Then, as a specific scenario, we analyzed the outage probability for the considered FAS under correlated Nakagami-$m$ fading channels by using popular Archimedean copulas. The results showed that the best performance in terms of the outage probability for the considered FAS is achieved by increasing the number of fluid antenna ports and the fading parameter as well as reducing the value of the dependence parameter. Additionally, it was shown that the Frank copula provides a deeper insight in order to evaluate the effect of the FAS parameters compared with other proposed copulas.

\bibliographystyle{IEEEtran}


\end{document}